\newtheorem{definition}{Definition}
\newtheorem{proposition}{Proposition}
\newtheorem{theorem}{Theorem}
\newtheorem{corollary}{Corollary}
\newtheorem{lemma}[theorem]{Lemma}
\title{Anti-crossings and spectral gap during quantum adiabatic evolution}
\author{Arthur Braida \\ Simon Martiel}
\begin{document}
\maketitle
\begin{abstract}
We aim to give more insights on adiabatic evolution concerning the occurrence of anti-crossings and their link to the spectral minimum gap $\Delta_{min}$. We study in detail adiabatic quantum computation applied to a specific combinatorial problem called weighted max $k$-clique. A clear intuition of the parametrization introduced by V. Choi is given which explains why the characterization isn't general enough. We show that the instantaneous vectors involved in the anti-crossing vary brutally through it making the instantaneous ground-state hard to follow during the evolution. This result leads to a relaxation of the parametrization to be more general.

\end{abstract}

\section{Introduction}
\label{intro}

Adiabatic Quantum Computation (AQC) was introduced by Fahri et al. in \cite{Farhi} as an alternative to the standard digital approach of quantum computing. In this setting, the quantum program is described as a time-dependent Hamiltonian driving some continuous-time evolution of a quantum system. The evolution of the quantum system is then given by Schrödinger's Equation:

\begin{align}
i \frac{ \partial}{\partial t}|\psi(t)\rangle=H(t) |\psi(t)\rangle
\end{align}
where the Planck's constant $\bar{h}$ has taken the value of unity and $H(t)$ is the time dependant Hamiltonian that rules the evolution of $|\psi (t) \rangle$. 
Usually, in the literature, this equation is expressed using the adimensional time $s=\frac{t}{T}$ as:
\begin{align}
i \frac{\partial}{\partial s}|\psi(s)\rangle=TH(s) |\psi(s)\rangle
\end{align}
thus removing the dependency in the execution time $T$.
In general, in AQC, an ordinary setting to solve hard combinatorials problems with a unique solution is to describe the time-dependant Hamiltonian $H(s)$ as an interpolation between an initial Hamiltonian $H_0$ easy to capture and a final Hamiltonian $H_1$ in which we encode the problem:
\begin{align}
\forall s \in [0,1],  H(s) = f(s)H_0+g(s)H_1
\end{align}
where $f(0)=g(1)=1$ and $f(1)=g(0)=0$. 
For instance, it is trivial to encode Quadratic Unconstrained Binary Optimization (QUBO) problems using Ising Hamiltonians \cite{10.3389/fphy.2014.00005,Farhi}. A recent overview in 2016 \cite{Lidar} shows the different applications and results in this framework.
\\ \\
In the AQC framework, we are interested in finding the optimal solution to our problem which corresponds to the ground-state of our final Hamiltonian $H_1$. Let's write the unique optimal solution of the given problem as $|GS\rangle$ i.e. $H(1)|GS\rangle = E_0(1)|GS\rangle$. We want to maximise the overlap of  $|GS\rangle$ with $|\psi(1)\rangle$ i.e. the probability $|\langle GS|\psi(1)\rangle|^2$ . By the adiabatic theorem, if the system evolves slowly enough and starts in the ground-state of $H_0$, it will stay in the ground-state during the evolution and will end in the ground-state of $H_1$, ensuring an overlap close to one. Therefore to take advantage of the adiabatic regime, the quantum system must be initiated as the ground-state of $H_0$, i.e. $H_0|\psi(s=0) \rangle=E_0(0)|\psi(s=0) \rangle$. The adiabatic theorem characterizes \textit{slowly enough} by giving a lower bound on the execution time T:
$$
T \gg \frac{\epsilon}{\Delta_{min}^2}
$$
where $\epsilon = \max_s | \langle E_1(s) | \frac{dH}{ds} |E_0(s) \rangle |$ and $\Delta_{min}=\min_s E_1(s) - E_0(s)$ where we introduce the eigen decomposition of $H(s)$: $H(s)| E_k(s)\rangle=E_k(s) |E_k(s)\rangle $ with $E_0(s)<E_1(s)\leq...\leq E_{2^n}(s)$.  Notice that only the ground-state needs to be non-degenerated, $|E_0(0)\rangle=|\psi(0) \rangle$ is the initial state and $|E_0(1)\rangle = |GS\rangle$ is the targeted state toward which the state $|\psi(s)\rangle$ evolves. In typical problem of interests, with reasonable interpolation's functions, $\epsilon$ scales polynomially with the problem size, so the run time is mostly ruled by the minimum gap. Evolutions with exponentially small minimum gap will lead to exponentially large annealing times. Usually, we call $H_0$ the mixing Hamiltonian because it allows the algorithm to search for solution in the Hilbert space and $H_1$ is the target Hamiltonian which is diagonal in the computational basis and encodes a classical cost function.
\\ \\
To solve a combinatorial problem with an adiabatic evolution, the main quantity to measure the efficiency of the algorithm is the min-gap of the given problem. A first approach to conclude on the efficiency of the algorithm is to distinguish between exponentially small min-gap or not. A useful tool for that matter is the analysis of physical phenomena happening during AQC called anti-crossings. It corresponds to specific behavior of the two lowest eigenvalues when the min-gap occurs. They were first described by Wilkinson \cite{Wilkinson1989} as two branches of hyperbola around the anti-crossing point $s^*$ with the following expressions: \begin{align}\label{wilkAC}
E^{\pm}(s) = E(s^*) + B(s -s^*) \pm \frac{1}{2}[\Delta_{min}^2 + A^2(s-s^*)^2]^{1/2}
\end{align} where $A$ and $B$ are respectively the difference and the mean of the slopes of the asymptotes of the hyperbola. The presence of an anti-crossing at a first-order phase transition during the evolution leads to an exponentially small min-gap. Thus to show that a specific adiabatic evolution will fail, it is enough to exhibit the occurrence of one anti-crossing. One method to analyze the occurrence of this phenomena at the end of the evolution is the perturbative framework where one tries to write the eigenvalues as a perturbative expansion. This has been done in \cite{Altshuler2018} by Altshuler et al. to show that AQC fails for solving NP-Hard problems even in the average setting for a given time-dependent Hamiltonian. Anti-crossings can occur at any time of evolution, so not all anti-crossing can be found using perturbative theory. This is why the characterization of the phenomena for AQC is still an active research subject. Recently, Vicky Choi suggested a new parametrization of the physical phenomenon to study the occurrence of anti-crossings during AQC \cite{Choi2019}. However, the link between the anti-crossing and the min-gap is not clear in the literature and as far as we know, it seems to be an admitted result. A second approach is to derive an analytical expression of the gap and directly find its minimum. Sun and Lu published in 2019 \cite{Sun2019} some results on a general criterion to establish the failure of AQC. They derived some analytical expressions of the min-gap after projecting the eigen relation on the subspaces span by the initial ground-state and the final ground-state. We suggest another expression of the min-gap that is more general. 
\\ \\
In this work, we show a new general expression for the min-gap based on a spectral development. Then we study the relationship between the definition of an anti-crossing given by \cite{Choi2019} and the min-gap. We introduce the problem of weighted $k$-clique to analyze in detail the definition and explain why it's not always satisfied. Finally, we show how the eigenvectors involved in the anti-crossing vary through it which leads us to our new characterization, more general, of an anti-crossing. The paper is organized as follows. In section 2, like Sun and Lu in \cite{Sun2019}, we derive a more general expression of the min-gap and a necessary condition for the failing of AQC. In section 3, we study the gap by analyzing the phenomenon of anti-crossing which happens during an adiabatic evolution. Different frameworks are suggested in the literature to study them, the last we've seen is the one described by V. Choi in 2020. We aim to give more insights of her parametrization and directly link it to the min-gap which for us was a missing result. Then we explain why the characterization doesn't capture an anti-crossing in general by illustrating it with the problem of finding weighted $k$-clique in random graphs and show that it doesn't work in general. In section 4, we offer a new characterization that captures more instances and we prove that our parametrization can effectively quantify the strength of the anti-crossing. At the end, we suggest new leads to pursue the understanding of the behaviors of these quantities during the evolution. In the appendices, we detail an example to explain and give full intuition in the comprehension of the parametrization (from V. Choi and the one presented in this paper). 

\section{Gap general expression}

In this section, we introduce a new notation and we derive a general expression of the min-gap by projecting the eigen relation on the computational basis and the instantaneous basis. We first introduce a way to visualize the mixing term by looking at the graph with adjacency matrix $-H_0$. We call this graph $G(H_0)$. For example, with a classical transverse field $H_0 =-\sum_i^n \sigma_x^{(i)}$ we get the hypercube in dimension $n$ (Figure \ref{ghx}). In this situation, with only bit-flip moves allowed, one can move from one node to another by flipping one bit from the bitstring (applying the $\sigma_x$ operator to the bit that is flipped). $G(H_0)$ represents the exploration state space of the algorithm. It shows how the different states communicate which each other and how we can navigate from one state to the other. \\
\begin{figure}[ht]
\centering
\includegraphics[scale=0.4]{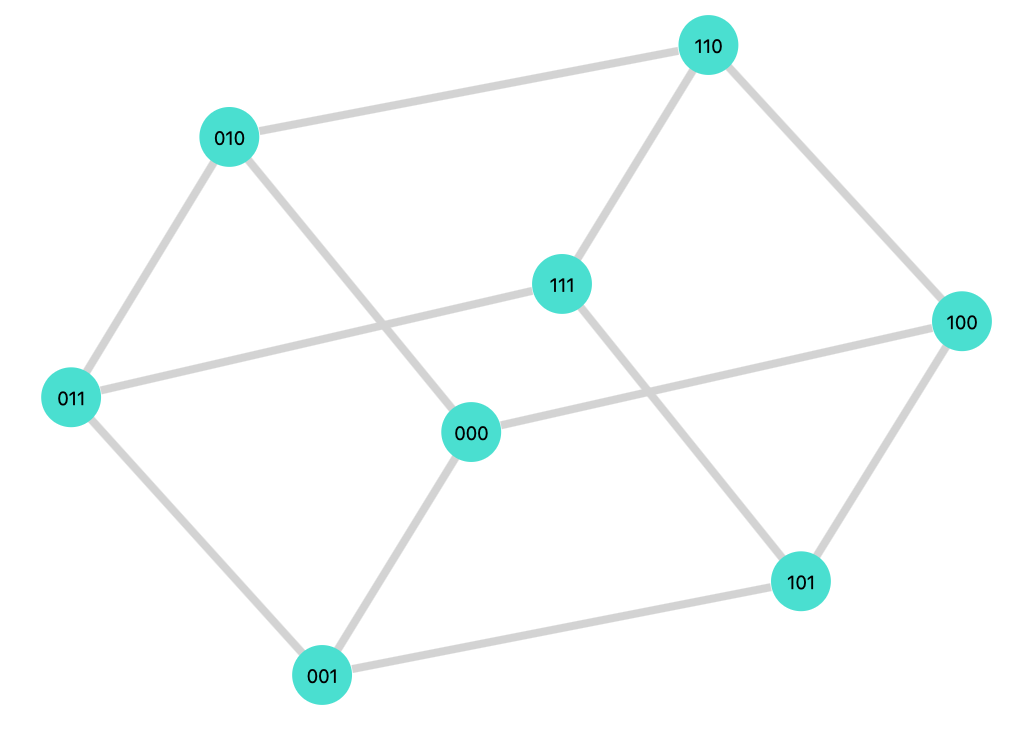}
\caption{$G(H_0)$ with the transverse field $n=3$}
\label{ghx}
\end{figure}

\noindent Each state represented on the graph is an eigen vector of the final Hamiltonian $H_1$ and all of them form the computational basis that will be denoted as $|\mathbf{x}_i \rangle$ ($|\mathbf{x}_i \rangle = |E_i(1)\rangle$). They all have a specific final energy $E_i(1)$ and the solution is the one with the lowest energy $E_0(1)$, i.e. the ground-state  $|E_0(1)\rangle = |GS\rangle$. \\

\noindent The quantity of interest is the behavior of the gap $\Delta(s)=E_1(s)-E_0(s)$ during the evolution. It is known that in general, we cannot derive analytical expression. Here, we express this gap in a similar way that was done in \cite{Sun2019}. Let's first isolate the two eigenvalues. By definition of the eigen decomposition we have 
$$
H(s)|E_k(s) \rangle = E_k(s) |E_k(s) \rangle
$$we compose on the left by an computational basis vector $|\mathbf{x}_i \rangle$ and we obtain 
$$
\langle \mathbf{x}_i |(f(s)H_0+g(s)H_1)|E_k(s) \rangle = E_k(s) \langle \mathbf{x}_i |E_k(s) \rangle 
$$because $H_1 |\mathbf{x}_i\rangle = E_i(1) |\mathbf{x}_i\rangle$, we end up with 
$$
f(s)\langle \mathbf{x}_i|H_0|E_k(s)\rangle + g(s) E_i(1) \langle \mathbf{x}_i |E_k(s) \rangle= E_k(s) \langle \mathbf{x}_i |E_k(s) \rangle .
$$Now we can isolate $E_k(s)$ to get
\begin{align}
\label{eval}
E_k(s) = g(s) E_i(1) - f(s)\frac{\langle \mathbf{x}_i|(-H_0)|E_k(s)\rangle}{\langle \mathbf{x}_i |E_k(s) \rangle} .
\end{align}
\noindent Note that he vector $(-H_0)|\mathbf{x}_i\rangle$ is the state that represents the neighbors of $|\mathbf{x}_i\rangle$ in $G(H_0)$, therefore we can write it like $|neigh_{H_0}(\mathbf{x}_i)\rangle$. This result links the evolution of the $i^{th}$ component of the $k^{th}$ instantaneous eigen vector to the component corresponding to the neighbors of the state $|\mathbf{x}_i\rangle$. It expresses how each state influences its neighbors or inversely how each state is influenced by its neighbors. We can now apply (\ref{eval}) with $k=0,1$ and formulate the gap $\Delta(s)$ as :
\begin{align}
\Delta(s) = f(s) \left( \frac{\langle neigh_{H_0}(\mathbf{x}_i) |E_0(s)\rangle}{\langle \mathbf{x}_i|E_0(s)\rangle} - \frac{\langle neigh_{H_0}(\mathbf{x}_i) |E_1(s)\rangle}{\langle \mathbf{x}_i|E_1(s)\rangle} \right)
\end{align}This is the most general expression of the gap presented here as it is true for any $i$. If we look at these expressions with $|\mathbf{x}_i \rangle = |GS \rangle$, we can understand the terms in parenthesis as the comparison of the presence of the solution in the instantaneous ground state $|\langle GS|E_0(s)\rangle|^2$ and the presence of its neighbors. The same holds for the first excited instantaneous state $|E_1(s) \rangle$. On one hand, the run time will be polynomial if the difference is only polynomially small. On the other hand, if it exists $s^*<1$ for which we have

\begin{align}
\frac{\langle neigh_{H_0}(GS) |E_0(s^*)\rangle}{\langle GS|E_0(s^*)\rangle} \simeq \frac{\langle neigh_{H_0}(GS) |E_1(s^*)\rangle}{\langle GS|E_1(s^*)\rangle}
\end{align}where we get rid of the interpolation coefficient $f(s)$ and $g(s)$, AQC will be a failure as the runtime will be exponentially large. This condition generalizes the expression found by Sun and Lu in \cite{Sun2019}. One can also write the following bounds on the min-gap occurring at $s^*<1$ by using triangular inequality: 

\begin{itemize} 
\item[$\bullet$]  $\Delta_{min}^2 \leq f(s^*)^2 \left( \frac{ |\langle neigh_{H_0}(\mathbf{x}_i) |E_0(s^*)\rangle|^2}{|\langle \mathbf{x}_i|E_0(s^*)\rangle|^2} +\frac{|\langle neigh_{H_0}(\mathbf{x}_i) |E_1(s^*)\rangle|^2}{|\langle \mathbf{x}_i|E_1(s^*)\rangle|^2} \right)$
\item[$\bullet$]  $\Delta_{min}^2 \geq f(s^*)^2 \left( \frac{ |\langle neigh_{H_0}(\mathbf{x}_i) |E_0(s^*)\rangle|^2}{|\langle \mathbf{x}_i|E_0(s^*)\rangle|^2} - \frac{|\langle neigh_{H_0}(\mathbf{x}_i) |E_1(s^*)\rangle|^2}{|\langle \mathbf{x}_i|E_1(s^*)\rangle|^2} \right)$
\end{itemize}
In \cite{Choi2019}, V. Choi introduces the concept of Low Energy Neighboring States (LENS) to explain in which situation the min-gap will be large or small. Looking at $G(H_0)$ centered on the solution $|GS\rangle$, the min-gap will be larger if the solution is surrounded by states with low energy. With the derived bounds presented above, we understand how the state $|neigh_{H_0}(GS)\rangle$ can affect the min-gap. 
\\ \\
This approach to measure the efficiency of the adiabatic evolution is rather still hard to apply because the existence of this $s^*$ isn't easy to show. The main point when talking about algorithms is to distinguish between exponentially large runtime or not. In the next section, we analyze what can create an exponentially small min-gap during the evolution.

\section{Anti-crossings and gap}
In this section, we will focus on anti-crossings which is known as a physical phenomenon happening in AQC that brings an exponentially small gap. We will study the more recent parametrization of this presented by V. Choi in \cite{Choi2019} and we will explicitly link her parameters with the gap which seems to be an admitted result. Then, with toy-example of the problem weighted max clique, we will show that her definition isn't always satisfied. 
\\ \\
From now on, we restrict the time dependant Hamiltonian to a linear interpolation, namely: $f(s)=1-s$ and $g(s)=s$. Let's state a useful lemma that applies in this setting:

\begin{lemma}
\label{eigenrelation}
Let's $\lambda_i, |n_i\rangle$ be an eigenvalue/eigenvector orthogonal pair of a real-valued symmetric matrix operator $B$ such that $\ddot{B}=0$, we have:
\begin{align}
\label{firstder}
&\dot{\lambda_i} = \langle n_i |\dot{B}|n_i \rangle \\
\label{firstdervec}
&|\dot{n_i}\rangle = \sum_{j \neq i} \frac{\langle n_j |\dot{B}|n_i \rangle}{\lambda_i - \lambda_j} |n_j\rangle \\
&\ddot{\lambda_i} = 2\sum_{j \neq i} \frac{|\langle n_j |\dot{B}|n_i \rangle |^2}{\lambda_i - \lambda_j}  
\end{align}
\end{lemma}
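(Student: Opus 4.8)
The plan is to treat this as the standard first-order (and second-order) perturbation calculation, obtained simply by differentiating the eigenrelation $B|n_i\rangle = \lambda_i|n_i\rangle$ with respect to the evolution parameter. Throughout I would fix, once and for all, an orthonormal family $\{|n_i\rangle\}$ of eigenvectors depending smoothly on $s$, normalized by $\langle n_i|n_i\rangle = 1$; this is legitimate precisely because the spectrum is assumed non-degenerate on the region of interest (the same assumption that makes the denominators $\lambda_i-\lambda_j$ meaningful), which also guarantees the required differentiability of $\lambda_i$ and $|n_i\rangle$. Since $B$ is real symmetric, I would moreover take the $|n_i\rangle$ real, so that $\langle n_i|\dot{B}|n_j\rangle$ is real and symmetric in $i,j$.

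The first step is \eqref{firstder}: differentiating $B|n_i\rangle = \lambda_i|n_i\rangle$ gives $\dot{B}|n_i\rangle + B|\dot{n_i}\rangle = \dot{\lambda_i}|n_i\rangle + \lambda_i|\dot{n_i}\rangle$, and left-multiplying by $\langle n_i|$ and using $\langle n_i|B = \lambda_i\langle n_i|$ makes the terms $\langle n_i|B|\dot{n_i}\rangle$ and $\lambda_i\langle n_i|\dot{n_i}\rangle$ cancel, leaving $\dot{\lambda_i} = \langle n_i|\dot{B}|n_i\rangle$. The second step is \eqref{firstdervec}: left-multiplying the same differentiated equation instead by $\langle n_j|$ with $j\neq i$, orthogonality removes $\dot{\lambda_i}\langle n_j|n_i\rangle$ and $\langle n_j|B = \lambda_j\langle n_j|$ yields $\langle n_j|\dot{B}|n_i\rangle = (\lambda_i-\lambda_j)\langle n_j|\dot{n_i}\rangle$, i.e. $\langle n_j|\dot{n_i}\rangle = \langle n_j|\dot{B}|n_i\rangle/(\lambda_i-\lambda_j)$. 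For the remaining coefficient $\langle n_i|\dot{n_i}\rangle$, differentiating $\langle n_i|n_i\rangle = 1$ forces $2\,\mathrm{Re}\,\langle n_i|\dot{n_i}\rangle = 0$, and with the real gauge above this gives $\langle n_i|\dot{n_i}\rangle = 0$; expanding $|\dot{n_i}\rangle = \sum_j \langle n_j|\dot{n_i}\rangle\,|n_j\rangle$ then produces \eqref{firstdervec}.

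The third step is the $\ddot{\lambda_i}$ formula: differentiate \eqref{firstder} to get $\ddot{\lambda_i} = \langle\dot{n_i}|\dot{B}|n_i\rangle + \langle n_i|\ddot{B}|n_i\rangle + \langle n_i|\dot{B}|\dot{n_i}\rangle$. Here is where the hypothesis $\ddot{B}=0$ enters: it kills the middle term. The two remaining terms are equal (by reality / self-adjointness of $\dot{B}$), so $\ddot{\lambda_i} = 2\langle n_i|\dot{B}|\dot{n_i}\rangle$; substituting \eqref{firstdervec} and using $\langle n_i|\dot{B}|n_j\rangle = \overline{\langle n_j|\dot{B}|n_i\rangle}$ gives $\ddot{\lambda_i} = 2\sum_{j\neq i}|\langle n_j|\dot{B}|n_i\rangle|^2/(\lambda_i-\lambda_j)$.

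There is no genuine analytic obstacle here; the computation is entirely driven by the eigenrelation and orthonormality. The only points requiring care — and which I would state explicitly as running hypotheses — are that the eigenpair is non-degenerate so that $\lambda_i(s)$ and $|n_i(s)\rangle$ are differentiable and the denominators make sense, and that the phase/gauge of $|n_i\rangle$ is chosen (using that $B$ is real symmetric) so that $\langle n_i|\dot{n_i}\rangle = 0$; this gauge choice is exactly what lets the diagonal term be dropped from the expansion of $|\dot{n_i}\rangle$ and is the one subtlety a careful reader might otherwise object to.
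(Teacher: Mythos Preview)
Your proof is correct and follows essentially the same route as the paper: differentiate the eigenrelation, project onto $\langle n_i|$ and $\langle n_j|$ to obtain \eqref{firstder} and \eqref{firstdervec}, then use $\ddot{B}=0$ together with \eqref{firstdervec} for the second-order formula. The only cosmetic difference is that the paper reaches $\ddot{\lambda_i}=2\langle n_i|\dot{B}|\dot{n_i}\rangle$ by differentiating the eigenrelation a second time and projecting on $|n_i\rangle$, whereas you differentiate \eqref{firstder} directly; your explicit remark about the real gauge guaranteeing $\langle n_i|\dot{n_i}\rangle=0$ is a useful clarification the paper leaves implicit.
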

 \begin{proof}
 We take the derivative of the eigen relation $B|n_i\rangle = \lambda_i |n_i\rangle$:
 $$
 \dot{B} |n_i \rangle + B |\dot{n_i} \rangle = \dot{\lambda_i}|n_i\rangle + \lambda_i |\dot{n_i} \rangle
 $$and knowing that $\langle n_i | \dot{n_i} \rangle=0$ because $\langle n_i |n_i\rangle=1$, we compose by $\langle n_i |$ on the left to get the first expression. Then, composing the same expression by another eigenvector $|n_j\rangle$ with eigenvalue $\lambda_j$ we get $(\lambda_i - \lambda_j) \langle n_j|\dot{n_i} \rangle = \langle n_j |\dot{B}|n_i \rangle$. From that we get the second expression. Eventually, we take the second derivative of the eigen relation:
 
 $$
  \ddot{B} |n_i \rangle + 2\dot{B} |\dot{n_i} \rangle +B |\ddot{n_i} \rangle= \ddot{\lambda_i}|n_i\rangle + 2\dot{\lambda_i} |\dot{n_i} \rangle + \lambda_i |\ddot{n_i} \rangle
 $$by hypothesis $\ddot{B}=0$, then projecting on $ |n_i \rangle$ and using (\ref{firstdervec}) give the third result.
 \qed
 \end{proof}
\subsection{Anti-crossings in AQC}

As mentioned previously, the complexity of an algorithm in AQC is in general very difficult to study analytically. Few examples are known where the expression of the gap can be found explicitly \cite{Lidar}. Therefore, to study the efficiency of AQC for a specific task, one way to solve it is by proving the occurrence of a strong anti-crossing at a first-order phase transition. Indeed, it is admitted that a first-order phase transition brings an exponentially small gap. In \cite{Altshuler2018}, Altshuler et al. studied the anti-crossings in a perturbative framework because it allows them to use the perturbative theory and the tools to analyze the eigenvalues of the perturbed Hamiltonian $\bar{H}(s) = H_1 + \lambda H_0$ with a small $\lambda$. The eigenvalues can be expressed as a perturbative expansion in $\lambda$ : $E_\mathbf{x}(\lambda) = E_\mathbf{x}+\sum_{q=1}^\infty \lambda^q E_\mathbf{x}^{(q)}$ and the coefficients $E_x^{(q)}$ can be derived analytically. By studying the behavior of these factors, one can conclude on the presence or not of a strong anti-crossing. However, this method works only when you can actually find anti-crossings at the end of the adiabatic evolution. From \cite{Laumann}, we know that they can happen at any time.  
\\ \\
In \cite{Choi2019}, the author suggests a new parametrization of an anti-crossing during an adiabatic evolution. From the states $|E_k(1)\rangle$ which can be degenerated and represents the states at the $k^{th}$ energy level, V. Choi introduces the following quantities:
\begin{align}
a_k(s)=|\langle E_k(1) |E_0(s) \rangle |^2 \\
b_k(s)=|\langle E_k(1) |E_1(s) \rangle |^2 
\end{align}
They are the decomposition in the possible degenerated computational basis of the instantaneous eigenvectors corresponding to the two lowest eigenvalues. Especially, with $k=0,1$, we have $a_0(s)$ and $a_1(s)$ representing respectively how much of the final ground-state $|E_0(1)\rangle = |GS\rangle$ and the final first excited state $|E_1(1)\rangle = |FS\rangle$ overlaps with the instantaneous ground-state $|E_0(s) \rangle$, i.e. $a_0(s)$ is the probability of finding $|GS\rangle$ in $|E_0(s)\rangle$. The same is true for $b_0(s)$ and $b_1(s)$ with the instantaneous first excited state $|E_1(s) \rangle$. Let's restate her definition of a $(\gamma,\epsilon)$-anti-crossing:

\begin{definition}
\label{ChoiAC}
For $\gamma \geq 0$, $\epsilon \geq 0$ we say there is an $(\gamma, \epsilon)$-Anti-crossing if there exists a $\delta>0$ such that
\begin{enumerate}
\item For $s\in[s^*-\delta,s^*+\delta]$, \begin{align}
|E_0(s) \rangle &= a_0(s) |GS \rangle + a_1(s) |FS\rangle \\
|E_1(s) \rangle &= b_0(s) |GS \rangle - b_1(s) |FS \rangle
\end{align} where $a_0(s)+a_1(s) \in [1-\gamma,1]$, $b_0(s)+b_1(s) \in [1-\gamma,1]$. Within the time interval $[s^*-\delta, s^*+\delta]$, both $|E_0(s)\rangle$ and $E_1(s)\rangle$ are mainly composed of $|GS\rangle$ and $|FS\rangle$. That is, all other states (eigenstates of the problem Hamiltonian $H_1$) are negligible (which sums up to at most $\gamma \geq 0$).

\item  At the avoided crossing point $s=s^*$, $a_0$, $a_1$, $b_0$, $b_1 \in [1/2 - \epsilon,1/2+\epsilon]$, for a small $\epsilon >0$. That is, $|E_0(s^*)\rangle \simeq 1/\sqrt{2}(|GS\rangle + |FS\rangle)$ and $|E_1(s^*)\rangle \simeq 1/\sqrt{2}(|GS\rangle - |FS\rangle)$.

\item  Within the time interval $[s^*-\delta, s^*+\delta]$, $a_0(s)$ increases from $\leq \gamma$ to $\geq (1-\gamma)$, while $a_1(s)$ decreases from $\geq (1-\gamma)$ to  $\leq \gamma$. The reverse is true for $b_0(s)$, $b_1(s)$. 
\end{enumerate}

\end{definition} 

\noindent This definition of an anti-crossing gives more insights toward the understanding of this physical phenomenon happening during an adiabatic evolution. Four new quantities, $a_0(s)$,$a_1(s)$ and $b_0(s)$,$b1(s)$, and two parameters $(\gamma,\epsilon)$ are at stake here, to describe it. Definition \ref{ChoiAC} presents in a precise way how the quantities vary through the anti-crossing and implicitly suggests on the size of the parameters that the smaller they are, the stronger the anti-crossing will be. However, there is a missing result that directly links this definition of an anti-crossing to the min-gap. How the parameters $(\gamma, \epsilon)$ influence the min-gap?  We derive here the exact expression (Proposition \ref{excatMG}) of the min-gap using the quantities introduced by V. Choi.

\begin{proposition}
\label{excatMG}
$\Delta_{min} = \sum_k E_k(1) \left [b_k(s^*) - a_k(s^*) \right ]$
\end{proposition}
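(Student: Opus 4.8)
The plan is to rewrite $\Delta_{min}=E_1(s^*)-E_0(s^*)$ as a difference of expectation values of $H_1$ alone, and then expand that difference in the eigenbasis of $H_1$. I would first assume the anti-crossing is genuine, so that $\Delta_{min}>0$ and the two lowest eigenvalues stay simple in a neighbourhood of $s^*$; then $E_0(s),E_1(s)$ are smooth there and $s^*$, being an interior minimizer of $\Delta(s)=E_1(s)-E_0(s)$, satisfies $\dot\Delta(s^*)=0$ (the boundary case $s^*=1$ is dispatched separately below). Since $H(s)=(1-s)H_0+sH_1$, evaluating the eigen relation in the state $|E_k(s^*)\rangle$ gives
\[
E_k(s^*)=(1-s^*)\,\langle E_k(s^*)|H_0|E_k(s^*)\rangle+s^*\,\langle E_k(s^*)|H_1|E_k(s^*)\rangle,
\]
so, writing $P:=\langle E_1|H_0|E_1\rangle-\langle E_0|H_0|E_0\rangle$ and $Q:=\langle E_1|H_1|E_1\rangle-\langle E_0|H_1|E_0\rangle$ (all at $s^*$), one gets $\Delta_{min}=(1-s^*)P+s^*Q$.

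The crux is the identity $P=Q$. I would obtain it from Lemma~\ref{eigenrelation} applied to $B=H(s)$, for which $\dot B=H_1-H_0$ and $\ddot B=0$: this gives $\dot E_k(s)=\langle E_k(s)|(H_1-H_0)|E_k(s)\rangle$, hence $0=\dot\Delta(s^*)=\dot E_1(s^*)-\dot E_0(s^*)=Q-P$. Substituting back, $\Delta_{min}=(1-s^*)Q+s^*Q=Q$, and the interpolation coefficients have disappeared.

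It then remains to identify $Q$ with $\sum_k E_k(1)[b_k(s^*)-a_k(s^*)]$. Writing $H_1=\sum_k E_k(1)\,\Pi_k$, where $\Pi_k$ is the orthogonal projector onto the $k$-th energy level of $H_1$ (so that $a_k(s)=\langle E_0(s)|\Pi_k|E_0(s)\rangle$ and $b_k(s)=\langle E_1(s)|\Pi_k|E_1(s)\rangle$, which matches V.~Choi's quantities also when levels are degenerate), one reads off $\langle E_0(s^*)|H_1|E_0(s^*)\rangle=\sum_k E_k(1)a_k(s^*)$ and $\langle E_1(s^*)|H_1|E_1(s^*)\rangle=\sum_k E_k(1)b_k(s^*)$, whence $Q=\sum_k E_k(1)[b_k(s^*)-a_k(s^*)]$, which is the claimed equality.

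The step I expect to be the real obstacle is recognizing that the first-order optimality condition $\dot\Delta(s^*)=0$ is exactly what cancels the $H_0$-contribution; everything after that is bookkeeping. Two edge cases need a remark: if $\Delta$ attains its minimum on an entire interval, any interior point serves as $s^*$; and if $s^*=1$ the identity is immediate, since $a_k(1)=\delta_{k,0}$ and $b_k(1)=\delta_{k,1}$ give $\sum_k E_k(1)[b_k(1)-a_k(1)]=E_1(1)-E_0(1)=\Delta_{min}$ (the standing assumption that $|GS\rangle$ is unique rules out a $H_1$-degenerate ground level here).
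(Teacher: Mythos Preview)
Your proposal is correct and follows essentially the same route as the paper: both use the first-order condition $\dot\Delta(s^*)=0$ together with Lemma~\ref{eigenrelation} to show that the $H_0$-contributions cancel, reducing $\Delta_{min}$ to $\langle E_1(s^*)|H_1|E_1(s^*)\rangle-\langle E_0(s^*)|H_1|E_0(s^*)\rangle$, and then expand in the eigenbasis of $H_1$. Your presentation is slightly tidier---you take the expectation of $H(s^*)$ directly rather than via the identity $s\dot H=H-H_0$, you use the spectral projectors $\Pi_k$ to handle degenerate levels in one line where the paper writes out the coefficients $\alpha_i,\beta_i$ explicitly, and you dispatch the boundary case $s^*=1$ which the paper leaves implicit---but the underlying argument is identical.
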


\begin{proof}
The gap reaches a minimum at $s^*$ so its derivative is null at $s^*$, thus using the first formula of lemme \ref{eigenrelation}, we get:

\begin{align*}
0 &=\frac{d\Delta}{ds}(s^*)\\
   &=\frac{dE_1}{ds}(s^*) - \frac{dE_0}{ds}(s^*) \\
   &= \langle E_1(s^*) |\dot{H} |E_1(s^*)\rangle - \langle E_0(s^*) |\dot{H} |E_0(s^*)\rangle 
\end{align*}
\begin{multline*}
  \Rightarrow \langle E_1(s^*) |H_0 |E_1(s^*)\rangle - \langle E_0(s^*) |H_0|E_0(s^*)\rangle=\langle E_1(s^*) |H_1 |E_1(s^*)\rangle \\
  - \langle E_0(s^*) |H_1|E_0(s^*)\rangle
\end{multline*}
In the setting of linear interpolation, we have $\dot{H}=H_1-H_0$, so we can rewrite the eigen relation like $s\dot{H}|E_k(s)\rangle = E_k(s)|E_k(s)\rangle - H_0|E_k(s)\rangle$. Then projecting against $\langle E_k(s)|$ gives $s\langle E_k(s)|\dot{H}|E_k(s)\rangle=E_k(s) - \langle E_k(s)|H_0|E_k(s)\rangle$. Applying this relation with $k=0,1$ on the previous calculation, gives another expression of the min-gap:
\begin{align*}
\Delta_{min}&=E_1(s^*) - E_0(s^*) \\
 						&=\langle E_1(s^*)|H_0|E_1(s^*)\rangle-\langle E_0(s^*)|H_0|E_0(s^*)\rangle \\
                        &=\langle E_1(s^*)|H_1|E_1(s^*)\rangle-\langle E_0(s^*)|H_1|E_0(s^*)\rangle
\end{align*}Now, let's write $|E_1(s^*)\rangle$ and $E_0(s^*)\rangle$ in the computational basis $|\mathbf{x}_i\rangle$. There exists $\alpha_i$ and $\beta_i$ such that 
\begin{align*}
|E_0(s^*)\rangle =\sum_i \alpha_i |\mathbf{x}_i\rangle \quad \text{and} \quad
|E_1(s^*)\rangle =\sum_i \beta_i |\mathbf{x}_i\rangle
\end{align*}By definition of $a_k$ and $b_k$, we have that 
\begin{align*}
\sum_{i\text{ s.t.} |\mathbf{x}_i \rangle \in \text{span}(|E_k(1)\rangle)}\alpha_i^2 = a_k(s^*) \quad \text{and} \quad
\sum_{i\text{ s.t.}|\mathbf{x}_i \rangle \in \text{span}(|E_k(1)\rangle)}\beta_i ^2= b_k(s^*)
\end{align*}Then  knowing that $H_1|\mathbf{x}_i\rangle = E_i(1) |\mathbf{x}_i\rangle$ and $E_i(1)$ is the same for all the subspace span by $ |E_k(1)\rangle$, we have: 
\begin{align*}
\langle E_0(s^*)|H_1|E_0(s^*)\rangle  &= \sum_j \sum_i\alpha_j \alpha_i \langle \mathbf{x}_j |H_1|\mathbf{x}_i\rangle \\
&=\sum_i \alpha_i^2 E_i(1) \\
&=\sum_k a_k(s^*) E_k(1)
\end{align*}The same goes with $|E_1(s^*) \rangle$ and $b_k(s^*)$ therefore we end up with:
$$
\Delta_{min} = \sum_k E_k(1) \left [b_k(s^*) - a_k(s^*) \right ]
$$\qed
\end{proof}
\noindent This result is a general expression of the min-gap when linear interpolation is used. We can see that it depends only on the new variables and the final energies. The min-gap will be exponentially small if each terms $a_k$ and $b_k$ are small or if $a_k \simeq b_k$. We can now upper-bound the min-gap using the parameter $\epsilon$ that quantifies the strength of the anti-crossing.
\begin{corollary}
For a $(\gamma, \epsilon)$-anti-crossing of definition \ref{ChoiAC}, we have:
$$
\Delta_{min} \leq K\epsilon
$$ for some constant $K$.
\end{corollary}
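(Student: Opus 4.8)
The plan is to read off the bound directly from the exact formula of Proposition~\ref{excatMG} together with item~2 of Definition~\ref{ChoiAC}. Since $\Delta_{min}=E_1(s^*)-E_0(s^*)\ge 0$, the triangle inequality applied to Proposition~\ref{excatMG} gives
\begin{align*}
\Delta_{min} \;=\; \Bigl|\sum_k E_k(1)\bigl(b_k(s^*)-a_k(s^*)\bigr)\Bigr| \;\le\; \sum_k |E_k(1)|\,\bigl|b_k(s^*)-a_k(s^*)\bigr| .
\end{align*}
I would then split the sum into the two ``active'' levels $k\in\{0,1\}$ and the tail $k\ge 2$, and estimate each piece separately.

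For $k\in\{0,1\}$, item~2 of Definition~\ref{ChoiAC} says $a_k(s^*),b_k(s^*)\in[\tfrac12-\epsilon,\tfrac12+\epsilon]$, hence $|b_k(s^*)-a_k(s^*)|\le 2\epsilon$, so these two terms contribute at most $2\epsilon\,(|E_0(1)|+|E_1(1)|)$. For the tail, the key point is that item~2 already forces the tail to be small \emph{at the crossing point}: from $a_0(s^*)+a_1(s^*)\ge 1-2\epsilon$ together with $\sum_k a_k(s^*)=1$ — which holds because the $|E_k(1)\rangle$ form an orthonormal basis (degenerate levels being grouped exactly as in the proof of Proposition~\ref{excatMG}) and $|E_0(s^*)\rangle$ is normalized — one gets $\sum_{k\ge2}a_k(s^*)\le 2\epsilon$, and likewise $\sum_{k\ge2}b_k(s^*)\le 2\epsilon$. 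Therefore
\begin{align*}
\sum_{k\ge2}|E_k(1)|\,\bigl|b_k(s^*)-a_k(s^*)\bigr| \;\le\; \Bigl(\max_{k\ge2}|E_k(1)|\Bigr)\sum_{k\ge2}\bigl(a_k(s^*)+b_k(s^*)\bigr) \;\le\; 4\epsilon\,\max_{k\ge2}|E_k(1)| .
\end{align*}
Adding the two estimates yields $\Delta_{min}\le K\epsilon$ with $K:=2|E_0(1)|+2|E_1(1)|+4\max_{k\ge2}|E_k(1)|$, which may be replaced by the cruder $K=6\,\|H_1\|$.

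There is essentially no hard step here once Proposition~\ref{excatMG} is in hand: the whole argument is one triangle inequality plus the normalization $\sum_k a_k=\sum_k b_k=1$. The two points worth stating carefully are (i) that the tail $\sum_{k\ge2}$ at $s^*$ is already controlled by $\epsilon$ via item~2 alone, so the parameter $\gamma$ — which constrains the eigenvectors over the whole window $[s^*-\delta,s^*+\delta]$ rather than just at $s^*$ — is not needed for this particular bound; and (ii) that $K$ depends only on the spectral radius of the target Hamiltonian, so it is an absolute constant for a fixed instance and $\mathrm{poly}(n)$ under the usual assumption that $H_1$ is normalized to have polynomially bounded spectrum. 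One could also note that the estimate is only informative for $\epsilon$ not too large, but in that regime it is anyway dominated by the trivial bound $\Delta_{min}\le \|H_1\|$.
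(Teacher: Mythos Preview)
Your proof is correct and follows essentially the same route as the paper: apply Proposition~\ref{excatMG}, split off $k\in\{0,1\}$ and the tail $k\ge2$, use item~2 of Definition~\ref{ChoiAC} to bound each $|b_k-a_k|$ by $2\epsilon$ for $k=0,1$, use the normalization $\sum_k a_k=\sum_k b_k=1$ to bound the tail mass by $2\epsilon$, and absorb the final energies into a constant. Your version is in fact slightly tidier in that you carry absolute values rather than assuming $E_k(1)\ge0$, and your remarks that $\gamma$ plays no role in this particular bound and that $K$ can be taken as a multiple of $\|H_1\|$ are correct and worth keeping.
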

\begin{proof}
For a $(\gamma, \epsilon)$-anti-crossing of definition \ref{ChoiAC}, $a_0(s^*), a_1(s^*), b_0(s^*), b_1(s^*) \in [1/2 - \epsilon, 1/2 + \epsilon]$. Thus, for $i=0,1$, $|b_i(s^*) - a_i(s^*)| \leq 2\epsilon$. This also means that $\sum_{i\geq2}a_i(s^*) \leq 2\epsilon$, and the same goes for the $b_i(s^*)'$s. The final energy $E_k(1)$ is upper-bounded by a constant $M$ and without loss of generality we assume $E_k(1) \geq 0$, therefore:
\begin{align*}
\Delta_{min} &\leq E_0(1) |b_0(s^*)-a_0(s^*)| + E_1(1) |b_1(s^*)-a_1(s^*)|\\
& \qquad \qquad \qquad \qquad \qquad \qquad \qquad \qquad \qquad \qquad +M\sum_{k\geq2}[b_k(s^*)+a_k(s^*)]\\
& \leq 2\epsilon E_0(1) + 2\epsilon E_1(1) + 4\epsilon M \\
& \leq 2(E_0(1)+E_1(1) +2M)\epsilon
\end{align*}\qed
\end{proof}

\noindent The smaller $\epsilon$ is, the smaller the min-gap will be. Therefore, if the instantaneous ground-state is purely a linear combination of $|GS\rangle$ and $|FS\rangle$ (i.e. $\epsilon$ is exponentially small), the min-gap will be exponentially small. This new result quantifies the strength of a V. Choi anti-crossing. 
\\ \\
\textbf{Intuition}: $a_k$'s can be seen as the direction toward which the instantaneous ground-state is evolving. For a particular $j$, if $a_j$ is becoming dominant at some point in the evolution, it means that $|E_0(s)\rangle$ is going toward the $j^{th}$ final energy.  We understand that the definition \ref{ChoiAC} isn't general enough as it is only focused on $a_0$ crossing with $a_1$, but in theory, $a_0$ could cross with any other level $a_j$. The same intuition holds for the $b_k$'s and the instantaneous first excited state $|E_1(s)\rangle$. In the appendices, we give a detailed example based on the problem described in \ref{MWkC} to explain each variable with different situations.
\\ \\
Here, we presented the parametrization suggested by V. Choi of an anti-crossing and completed it with a new result that links her definition with the min-gap. 
In the following subsection, we introduce a toy problem in order to illustrate this intuition and construct counter-examples to Choi's definition of anti-crossings.

\subsection{Maximum-Weight $k$-clique problem \label{MWkC}}

The Maximum-Weight $k$-clique problem as an optimization version is defined as finding a $k$-clique maximizing the weights of the nodes in a random graph. The random graph will be denoted by $G(E,V)$ where $E$ is the set of edges and $V$ the set of nodes labelled from $1$ to $n$ with each having a weight $w_i$. In \cite{Childs2000}, the authors gave a way to encode in a target Hamiltonian the solution of finding a $k$-clique (without taking the weight into consideration first). In the latter problem, we are only interested in sub-graphs of size $k$, thus we can restrict the whole Hilbert space of size $2^n$ to the $\binom{n}{k}$ Hilbert space, that is the Hilbert space spanned by bit-strings of Hamming weight $k$. Consequently, the mixing Hamiltonian $H_0$ must stabilize this Hilbert space and thus preserve the Hamming weight of the computational basis vectors encoding our subgraphs. The target Hamiltonian $H_1$ adds an energy penalty for each edge missing in the subgraph. A natural choice for $H_0$ is:

$$
 H_0 = - \sum_i S^{i,i+1}
$$
where
$$ S = \left(\begin{matrix}
0 & 0 & 0 & 0\\
0 & 0 & 1 & 0\\
0 & 1 & 0 & 0\\
0 & 0 & 0 & 0
\end{matrix}\right) ^{ij}
\text{ swaps qubit $i$ and $j$ } $$ 
and for the final Hamiltonian:
 $$ H_1 = \sum_{i, j \notin E} A^{i j} $$
where 
$$ A = \left(\begin{matrix}
0 & 0 & 0 & 0\\
0 & 0 & 0 & 0\\
0 & 0 & 0 & 0\\
0 & 0 & 0 & 1
\end{matrix}\right), \text{ is a "logical AND" between two qubits.} $$
Then, knowing that $ 4A = Z \otimes Z + I \otimes I - Z \otimes I - I \otimes Z $, and $2S = X \otimes X + Y \otimes Y$, where $X$, $Y$ and $Z$ are the Pauli matrices, we can restate the Hamiltonians using Pauli operators:
\begin{align*}
H_0 =& -\frac{1}{2} \sum_i (\sigma_x^{(i)} \sigma_x^{(i+1)} + \sigma_y^{(i)} \sigma_y^{(i+1)} ) \\
H_1 =& \frac{1}{4} |\bar{E}| \mathbb{I} + \frac{1}{8} \sum_{i,j} \bar{G_{ij}} \sigma_z^{(i)} \sigma_z^{(j)} -  \frac{1}{4} \sum_i \bar{\deg{(i)}} \sigma_z^{(i)}
\end{align*}
where $\bar{G}$, $\bar{E}$ and $\bar{deg()}$ is respectively the adjacency matrix, the edges set and the degree function of the complementary graph of $G$. Then, we extract the states of Hamming weight $k$. Now, to take into account the weighted version of the problem, we have to add a term in the final Hamiltonian that favors the total weight of a sub-graph, namely $ -\alpha\sum_i \frac{1-\sigma_z^{(i)}}{2}w_i$ where $\alpha$ is a parameter we can play with to give more or less importance to the weights and $w_i$ is the weight of the node labelled $i$. \\

\noindent The toy example we constructed to illustrate anti-crossing phenomenon is a graph on $n=6$ vertices and $|E|=7$ edges and the size of the clique we search is of size $k=3$ (see figure \ref{toy1}). Let's $w=[1.0,1.0,1.0,1.5,1.5,1.5]$ be the list of weights of the six nodes. Using this weights vector, for $\alpha < 2/3$, the ground state of $H_1$ is $|GS \rangle =|111000\rangle$ with energy $0-3\alpha$ and the first exited state $|FS\rangle = |000111 \rangle$ with energy $1-4.5\alpha$ (for $\alpha=0$, there are many first excited states). The ground-state is degenerated for 
$
1-4.5\alpha = -3\alpha
$ i.e. $\alpha=2/3$. \\

\begin{figure}[ht]
  \centering
  \begin{tikzpicture}
  \node[circle, draw, inner sep=1pt](v6) at (0,0){6};

  \node[circle, draw, inner sep=1pt](v1) at (2,0){1};
  \node[circle, draw, inner sep=1pt](v3) at (3,-1){3};
  \node[circle, draw, inner sep=1pt](v4) at (-1,-1){4};

  \node[circle, draw, inner sep=1pt](v5) at (0,-2){5};
  \node[circle, draw, inner sep=1pt](v2) at (2,-2){2};

  \draw (v1) -- (v3) -- (v2) -- (v1) -- (v6) --(v5) -- (v2);
  \draw (v6) -- (v4) ;

  \end{tikzpicture} 
  \caption{Toy example 1 with weights [1,1,1,1.5,1.5,1.5] for each node. For $\alpha < 2/3$, the solution is the triangle with nodes labelled 1, 2 and 3. For $\alpha>2/3$, the solution is the subgraph \{4,5,6\}.}
  \label{toy1}
\end{figure}
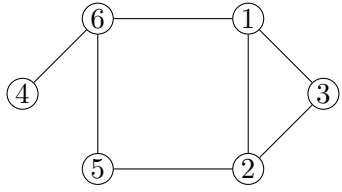

\noindent Now let's have a look at the evolution of the different quantities occurring during an anti-crossing to compare two instances, $\alpha=0$ and $\alpha=0.5$. First, let's explicit the final energies with the different states associated (figure \ref{States_toy1}). We see that in the case of $\alpha=0$, the first excited state is degenerated with 8 states whereas, with $\alpha=0.5$, both ground and first excited states are non-degenerated. \\

\begin{figure}[ht]
\centering
$$
\begin{array}{cc}
\includegraphics[scale=0.4]{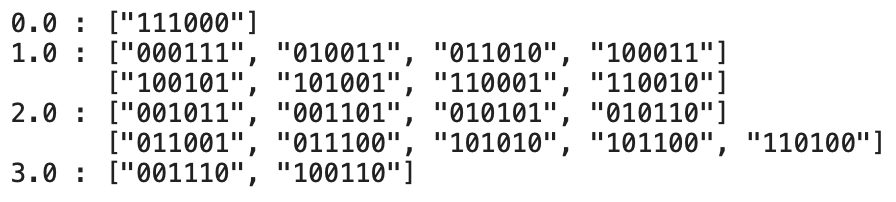} &
\includegraphics[scale=0.4]{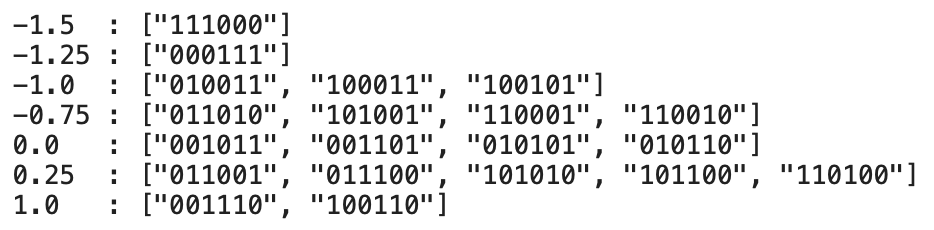} \\
(a) & (b)
\end{array}
$$
\caption{States with energies for toy example 1 for $\alpha=0$ (a) and $\alpha=0.5$ (b)}
\label{States_toy1}
\end{figure}

\begin{figure}
\centering
$$
\begin{array}{cc}
\includegraphics[scale=0.3]{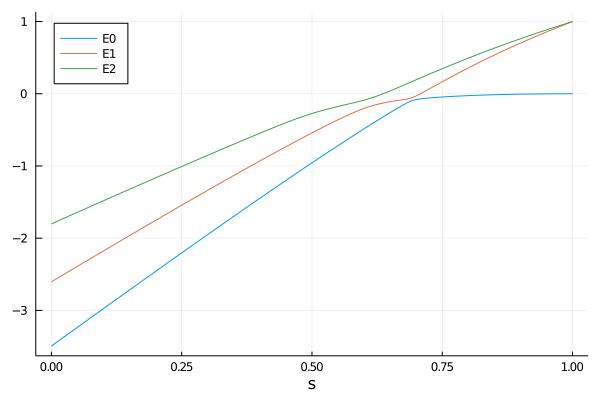} &
\includegraphics[scale=0.3]{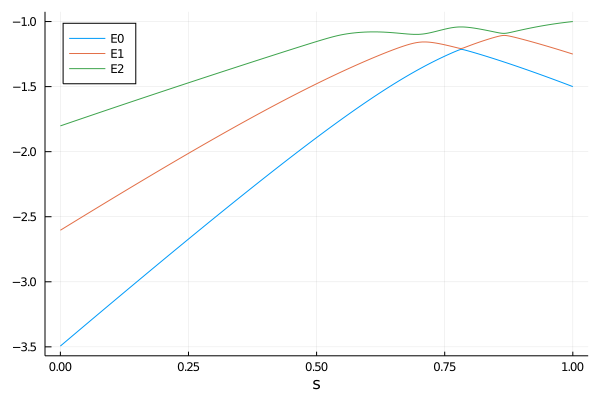} \\
(a) & (b)
\end{array}
$$
\caption{Eigenvalues evolution for toy example 1 with $\alpha=0$ (a) and $\alpha=0.5$ (b)}
\label{Ei_toy1}
\end{figure}

\noindent In the following analysis, we call anti-crossing the two hyperbolas described by Wilkinson (\ref{wilkAC}) that can happen between any two eigenvalues and we denote by $s^*$ the anti-crossing between the two lowest eigenvalues. We plot the evolution of the first three eigenvalues for the two cases of interest. On figure \ref{Ei_toy1} (a), two types of anti-crossings are distinguishable: the one between $E_2$ and $E_1$ which is quite weak and the one between $E_1$ and $E_0$ which is strong and of interest in AQC. Here, the slope of $E_0(s)$ before $s^*$ is the slope of $E_1(s)$ after $s^*$. In other words, the lowest energy was going toward $E_1(1)$ before bouncing against the second lowest energy to redirect toward $E_0(1)$. If we follow the intuition given in the previous section, we can expect to see $a_0$ crossing $a_1$. Now, looking at figure \ref{Ei_toy1} (b), we see that $E_2$ and $E_1$ are getting closer before and even more after the anti-crossing between $E_1$ and $E_0$. The slope of $E_0$ before $s^*$ is jumping from one level to the other ending in $E_2(1)$. Our intuition says that $a_2$ is becoming dominant before $s^*$ and crosses with $a_0$. The latter case doesn't follow the parametrization definition \ref{ChoiAC}. The plots of $a_k$'s and $b_k$'s for these two instances on figure \ref{AkBk_toy1} validate our expectations on the behavior of these curves. On the left, figures \ref{AkBk_toy1}(a) and (c) show the evolution described by definition \ref{ChoiAC}, however, figures \ref{AkBk_toy1}(b) and (d) show other variations of the quantities $a_k$ and $b_k$ for an anti-crossing. Therefore, we have created an instance that doesn't follow the definition introduced by V. Choi and explain precisely why her parametrization cannot be satisfied in general. \\

\begin{figure}[ht]
\centering
$$
\begin{array}{cc}
\includegraphics[scale=0.3]{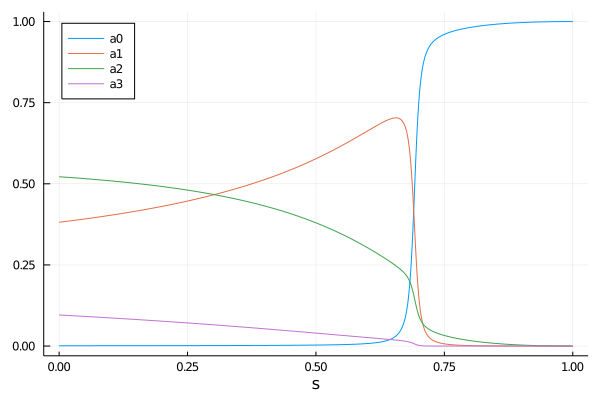} &
\includegraphics[scale=0.3]{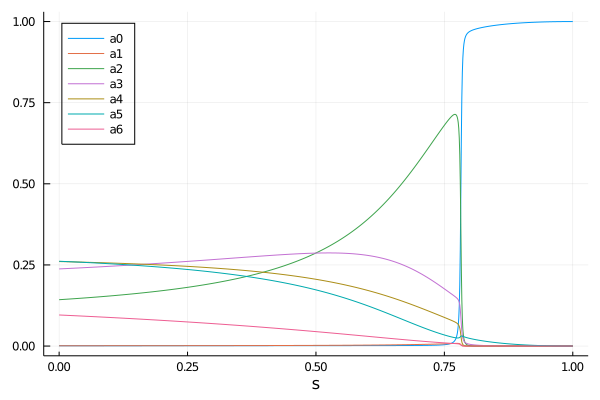} \\
(a) & (b) \\
\includegraphics[scale=0.3]{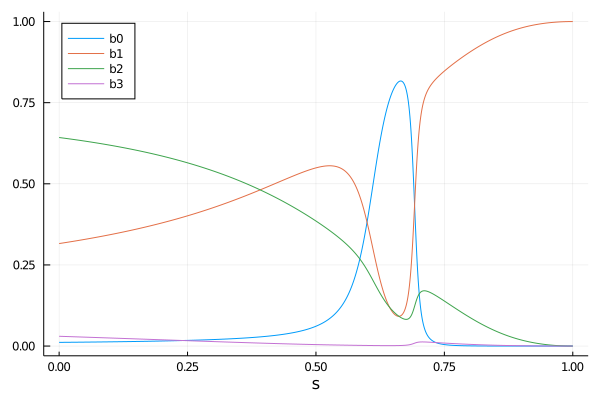} &
\includegraphics[scale=0.3]{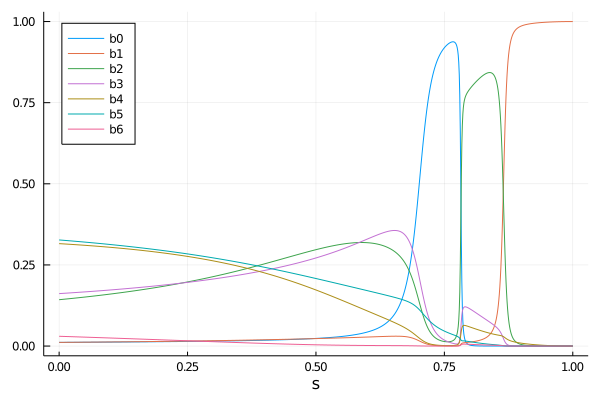} \\
(c) & (d)
\end{array}
$$
\caption{$a_k$ (top) and $b_k$ (bottom) during evolution for toy example 1 with $\alpha=0$ (left) and $\alpha=0.5$ (right)}
\label{AkBk_toy1}
\end{figure}

\noindent Here we gave more insights on the quantities introduced by V. Choi to describe anti-crossing and we explicit the link with the min-gap. A strong anti-crossing will end up in a small min-gap. However, her definition isn't general enough as the $a_k$'s and $b_k$'s don't explain exactly the behavior described by the definition. In the next section, still motivated by Choi's article, we give a new parametrization definition of an anti-crossing between the two lowest energies based only on $a_0(s)$ and $b_0(s)$ and derive a result that quantifies the min-gap. 

\section{New characterization of anti-crossing}

In this section, we give a new general result on the derivatives of the instantaneous vectors involved in the anti-crossing and then we present a new characterization of an anti-crossing more general. 

\subsection{Variation of $|E_0(s)\rangle$ at $s^*$ }
During AQC, we are interested in the evolution of the ground-state as it encodes the solution of the given problem at the end of the computation. We know that it might undergo a physical phenomenon called anti-crossing that leads to an exponentially small min-gap. By adiabatic theorem, the state stays close to the instantaneous ground-state. Why is it hard to follow it sometimes? What is happening to the instantaneous ground-state and first excited state at crossing point? We give one element of answers in the next theorem, proving that the variation of the two states that crossed are symmetrically similar. The amplitude of this variation is inversely proportional to the min-gap, the direction of the variation is only supported by the other vector state involved in the anti-crossing.

\begin{theorem}
\label{ACtheorem}
At anti-crossing point $s^*$ between $E_0(s)$ and $E_1(s)$ the following is true:
\begin{align}
\frac{d}{ds}|E_0(s^*) \rangle = &-\beta |E_1(s^*) \rangle \\
\frac{d}{ds}|E_1(s^*) \rangle = & \quad \  \beta |E_0(s^*) \rangle 
\end{align}where $\beta = \frac{\langle E_0(s^*) |\dot{H}|E_1(s^*)\rangle}{\Delta_{min}}$ 

\end{theorem}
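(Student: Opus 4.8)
The plan is to read Theorem \ref{ACtheorem} off the eigenvector-derivative formula (\ref{firstdervec}) of Lemma \ref{eigenrelation}, which applies because the linear interpolation makes $\dot H = H_1 - H_0$ and $\ddot H = 0$. Applying (\ref{firstdervec}) to $B = H(s)$ with $i = 0$ at the point $s = s^*$ gives
\begin{align*}
\frac{d}{ds}|E_0(s^*)\rangle = \sum_{j\neq 0} \frac{\langle E_j(s^*)|\dot H|E_0(s^*)\rangle}{E_0(s^*) - E_j(s^*)} |E_j(s^*)\rangle ,
\end{align*}
and I would single out the $j = 1$ term, whose denominator is $E_0(s^*) - E_1(s^*) = -\Delta_{min}$.

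The heart of the argument is that at an anti-crossing between the two lowest levels the minimum gap $\Delta_{min}$ is exponentially small, whereas every other spectral gap $E_j(s^*) - E_0(s^*)$ with $j \geq 2$ stays bounded below by a constant (this is precisely what it means for the avoided crossing to sit "between $E_0$ and $E_1$"). Hence the $j = 1$ term carries an exponentially large coefficient $\sim 1/\Delta_{min}$ and dominates the sum, while the remaining terms are $O(1)$ and negligible by comparison, so to leading order
\begin{align*}
\frac{d}{ds}|E_0(s^*)\rangle = -\frac{\langle E_1(s^*)|\dot H|E_0(s^*)\rangle}{\Delta_{min}} |E_1(s^*)\rangle = -\beta |E_1(s^*)\rangle ,
\end{align*}
where I use that $H$, hence $\dot H$, is real symmetric so $\langle E_1(s^*)|\dot H|E_0(s^*)\rangle = \langle E_0(s^*)|\dot H|E_1(s^*)\rangle = \beta\,\Delta_{min}$. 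Repeating the computation with $i = 1$ makes the $j = 0$ term $\frac{\langle E_0(s^*)|\dot H|E_1(s^*)\rangle}{E_1(s^*) - E_0(s^*)}|E_0(s^*)\rangle = \beta|E_0(s^*)\rangle$ the dominant contribution, which is the second identity. As consistency checks one verifies $\frac{d}{ds}\langle E_0|E_0\rangle = 2\,\mathrm{Re}\,\langle \tfrac{d}{ds}E_0|E_0\rangle = -2\beta\langle E_1|E_0\rangle = 0$ and $\frac{d}{ds}\langle E_0|E_1\rangle = -\beta + \beta = 0$, both automatic for real $\beta$.

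The only real content beyond substituting into Lemma \ref{eigenrelation} is making the dominance step quantitative: one needs a size-uniform lower bound on the gaps $E_j(s^*) - E_0(s^*)$ for $j \geq 2$ together with $\langle E_j(s^*)|\dot H|E_0(s^*)\rangle = O(1)$, so that the discarded terms are genuinely exponentially small relative to the retained one — this is the main obstacle. Accordingly I would read the identities of Theorem \ref{ACtheorem} as holding up to corrections that vanish as $\Delta_{min}\to 0$ (equivalently, exactly in the idealized two-level reduction of the anti-crossing) rather than as strict equalities, and I would make that caveat explicit in the statement or immediately after the proof.
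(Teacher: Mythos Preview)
Your starting point matches the paper's exactly: apply (\ref{firstdervec}) at $s=s^*$ and isolate the $j=1$ (resp.\ $j=0$) term. The divergence is in how you dispose of the remaining sum over $j\geq 2$. You argue that those terms are merely suppressed by the ratio of $\Delta_{min}$ to the higher gaps, and accordingly you downgrade the theorem to a leading-order statement valid only as $\Delta_{min}\to 0$. The paper instead proves that every term with $j\geq 2$ vanishes \emph{exactly} at $s^*$, so no approximation is involved.

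The idea you are missing is to exploit the Wilkinson hyperbola form (\ref{wilkAC}) of the anti-crossing, not just the smallness of $\Delta_{min}$. From the hyperbola picture one has $\dot E_0(s^*-\delta)=\dot E_1(s^*+\delta)$ for small $\delta$; Taylor-expanding and using $\dot\Delta(s^*)=0$ yields $\ddot E_0(s^*)+\ddot E_1(s^*)=0$. Now feed in the third formula of Lemma \ref{eigenrelation}: the $j=1$ term of $\ddot E_0$ cancels the $j=0$ term of $\ddot E_1$, and what remains is
\[
\sum_{j\geq 2}\left[\frac{\langle E_0(s^*)|\dot H|E_j(s^*)\rangle^2}{E_j(s^*)-E_0(s^*)}+\frac{\langle E_1(s^*)|\dot H|E_j(s^*)\rangle^2}{E_j(s^*)-E_1(s^*)}\right]=0,
\]
a sum of nonnegative terms. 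Hence each numerator vanishes, i.e.\ $\langle E_i(s^*)|\dot H|E_j(s^*)\rangle=0$ for $i=0,1$ and all $j\geq 2$, which kills the tail of (\ref{firstdervec}) outright. With this step the identities of Theorem \ref{ACtheorem} are exact, and your caveat about corrections is unnecessary.
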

\begin{proof}
Let's write the expression we get using the second formula of lemma \ref{eigenrelation}, for $i=0,1$ :
$$
\frac{d}{ds}|E_i(s) \rangle = (-1)^{i+1}\frac{\langle E_0(s) |\dot{H}|E_1(s) \rangle}{E_1(s) - E_0(s)} |E_{i \oplus 1}(s)\rangle - \sum_{j \geq 2} \frac{\langle E_j(s) |\dot{H}|E_i(s) \rangle}{E_j(s) - E_i(s)} |E_j(s)\rangle
$$We recognize $\beta$ as the first coefficient and a sum over levels higher than 2. Let's show that the sum is null. To do so, we will show that for $i=0,1$,$\forall j \geq 2, \langle E_i(s^*) |\dot{H}|E_j(s^*) \rangle=0$. We start from the definition of an anti-crossing by Wilkinson, where the two curves $E_0(s)$ and $E_1(s)$ behave like two hyperbolas around $s^*$ and therefore they satisfy $\frac{dE_0}{ds}(s^*-\delta) = \frac{dE_1}{ds}(s^*+\delta)$ for a small $\delta > 0$. This means that the slope is transitioning from level 0 to level 1. Then we use Taylor expansion at the first order in $\delta$ :
$$
\frac{dE_0}{ds}(s^*) - \delta  \frac{d^2E_0}{ds^2}(s^*) = \frac{dE_1}{ds}(s^*)+ \delta  \frac{d^2E_1}{ds^2}(s^*)
$$In the neighborhood of $s^*$, the gap $\Delta(s)$ is minimal in $s^*$, thus $\frac{dE_1}{ds}(s^*) - \frac{dE_0}{ds}(s^*)=\frac{d\Delta}{ds}(s^*)  = 0$. We are left with :
$$
\frac{d^2E_0}{ds^2}(s^*) +\frac{d^2E_1}{ds^2}(s^*) =0
$$We are in a setting with linear interpolation, so $\ddot{H}=0$. We can use the third expression of lemma \ref{eigenrelation} to get:
$$
2\sum_{j \neq 0} \frac{\langle E_0(s^*) |\dot{H}|E_j (s^*)\rangle ^2}{E_0(s^*) - E_j(s^*)}  + 2\sum_{j \neq 1} \frac{\langle E_1(s^*) |\dot{H}|E_j (s^*)\rangle ^2}{E_1(s^*) - E_j(s^*)} = 0
$$Pulling out from the first sum the term $j=1$ and the term $j=0$ from the second one, they cancel each other and we end up with:
$$
\sum_{j \geq 2} \left [\frac{\langle E_0(s^*) |\dot{H}|E_j (s^*)\rangle ^2}{ E_j(s^*)-E_0(s^*) }  +\frac{\langle E_1(s^*) |\dot{H}|E_j (s^*)\rangle ^2}{E_j(s^*)-E_1(s^*) } \right]= 0
$$The sum of positive summands is equal to zero, so each summand is equal to zero. Because the denominators are strictly positives, we obtain:
$$
\forall j \geq 2, \left \{ \begin{array}{ll}
\langle E_0(s^*) |\dot{H}|E_j (s^*)\rangle =0  \\
\langle E_1(s^*) |\dot{H}|E_j (s^*)\rangle =0
\end{array} \right.
$$This is enough to conclude. 
\qed
\end{proof}
\noindent Actually, with some manipulations of the eigen relation like in the proof of proposition 1,  we have that for $i \neq j$, $\langle E_i(s) |\dot{H}|E_j(s)\rangle = \frac{1}{s} \langle E_i(s)|-H_0|E_j(s)\rangle =\frac{1}{1-s}\langle E_i(s)|H_1|E_j(s)\rangle \geq 0$.  Thus, one can show that:
$$
\forall j \geq 2, \left \{ \begin{array}{ll}
\langle E_0(s^*) |H_1|E_j (s^*)\rangle =\langle E_0(s^*) |H_0|E_j (s^*)\rangle =0  \\
\langle E_1(s^*) |H_1|E_j (s^*)\rangle =\langle E_1(s^*) |H_0|E_j (s^*)\rangle =0
\end{array} \right.
$$Furthermore, if $H_1$ is positive semi-definite (which is easy to make it like that), we have $\beta \geq 0$ . 
\\ \\
Here, we understand that the smaller the min-gap is, the more brutal the variation of the instantaneous ground-state will be. This phenomenon can explain why it is hard to follow the trajectory of the ground-sate vector. Furthermore, the direction of the variation is purely supported by the other vector involved in the anti-crossing meaning that $|E_0(s^*)\rangle$ and $\frac{d}{ds}|E_0(s^*)\rangle$ form the same plane as $|E_0(s^*)\rangle$ and $|E_1(s^*)\rangle$ where these two vectors rotate. 

\subsection{New anti-crossing definition}

During adiabatic evolution, each level of energy will undergo some anti-crossings. We aim to explain the anti-crossings happening to one specific level namely the ground-state. In AQC, the purpose is to find the ground-state of the final Hamiltonian, thus knowing from which higher energy level $|GS\rangle$ comes, will help to understand if the solution has to jump many levels before ending in the ground-state. This information is present in the element of $|E_i(s)\rangle$ corresponding to $|GS\rangle$. Therefore, instead of looking at the decomposition of the instantaneous ground-state $|E_0(s)\rangle$ in the computational basis, we can look at the decomposition of the final ground-state $|GS\rangle$ in the instantaneous basis.

\begin{align}
g_k(s)=|\langle GS|E_k(s)\rangle|^2
\end{align}The value of the $g_k$ during the evolution corresponds to the probability to measure the solution if the system is in the $k^{th}$ energy level. In AQC, generally, we start from the ground-state of the initial Hamiltonian $H_0$, so the state stays close to the instantaneous ground-state according to the adiabatic theorem. Therefore, we hope that $g_0$ is dominant at some point. \\

\noindent Notice that $g_0=a_0$ and $g_1=b_0$. The idea is to relax definition \ref{ChoiAC} to include a better description of an anti-crossing between $E_0$ and $E_1$. The intuition behind those variables are quite similar, a dominant $g_j$ during the evolution means that the instantaneous basis vector $|E_j(s)\rangle$ is in direction toward the solution before $s^*$. We can restate the parametrization of an $(\gamma, \epsilon)$-anti-crossing based only on $g_0$ and $g_1$.

\begin{definition}
\label{gAC}
For $\gamma \geq 0$, $\epsilon \geq 0$ we say there is an $(\gamma, \epsilon)$-Anti-crossing if there exists a $\delta>0$ such that
\begin{enumerate}
\item For $s\in[s^*-\delta,s^*+\delta]$, \begin{align}
|GS \rangle=g_0(s)|E_0(s) \rangle + g_1(s) |E_1(s) \rangle  
\end{align} where $g_0(s)+g_1(s) \in [1-\gamma,1]$. Within the time interval $[s^*-\delta, s^*+\delta]$, $|GS\rangle$ is mainly composed of $|E_0(s)\rangle$ and $E_1(s)\rangle$. That is, all other states (eigenstates of the Hamiltonian $H(s)$) are negligible (which sums up to at most $\gamma \geq 0$).
\item  At the avoided crossing point $s=s^*$, $g_0=g_1 \in [1/2 - \epsilon,1/2]$, for a small $\epsilon >0$. That is, $|GS\rangle\simeq 1/\sqrt{2}(E_0(s^*)  + E_1(s^*) )$.
\item  Within the time interval $[s^*-\delta, s^*+\delta]$, $g_0(s)$ increases from $\leq \gamma$ to $\geq (1-\gamma)$, while $g_1(s)$ decreases from $\geq (1-\gamma)$ to  $\leq \gamma$. \\
\end{enumerate}
\end{definition}

\noindent Our new definition is quite similar to V Choi's one and trivially includes all anti-crossings described by her definition. Furthermore, we have the following result that links it to the min-gap and properly characterizes the strength of an anti-crossing.

\begin{corollary}
\begin{align}
\frac{dg_0}{ds}(s^*) + \frac{dg_1}{ds}(s^*) =& 0 \\
\frac{dg_0}{ds}(s^*) - \frac{dg_1}{ds}(s^*) =& 2 g_{0,1}(s^*) \beta
\end{align}
\end{corollary}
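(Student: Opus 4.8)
The plan is to derive both identities as an essentially one-line consequence of Theorem~\ref{ACtheorem}, since all the analytic content (the vanishing at $s^*$ of the couplings $\langle E_0(s^*)|\dot H|E_j(s^*)\rangle$ and $\langle E_1(s^*)|\dot H|E_j(s^*)\rangle$ for $j\ge 2$) was already discharged there. Because $H(s)$ is real symmetric and the two lowest levels are non-degenerate in a neighbourhood of $s^*$ (as $\Delta(s^*)=\Delta_{min}>0$, so the spectral decomposition is locally smooth with a real eigenbasis), we may take $|E_0(s)\rangle,|E_1(s)\rangle$ real and $C^1$ near $s^*$, so that $g_k(s)=\langle GS|E_k(s)\rangle^2$ for $k=0,1$. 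First I would differentiate this by the product rule:
\[
\frac{dg_k}{ds}(s) \;=\; 2\,\langle GS|E_k(s)\rangle\,\langle GS|\dot E_k(s)\rangle ,
\]
where $|\dot E_k(s)\rangle:=\frac{d}{ds}|E_k(s)\rangle$.

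Next I would specialise to $s=s^*$ using Theorem~\ref{ACtheorem}, which gives $|\dot E_0(s^*)\rangle=-\beta|E_1(s^*)\rangle$ and $|\dot E_1(s^*)\rangle=\beta|E_0(s^*)\rangle$ with $\beta=\langle E_0(s^*)|\dot H|E_1(s^*)\rangle/\Delta_{min}$. Taking inner products with $\langle GS|$ yields $\langle GS|\dot E_0(s^*)\rangle=-\beta\langle GS|E_1(s^*)\rangle$ and $\langle GS|\dot E_1(s^*)\rangle=\beta\langle GS|E_0(s^*)\rangle$, whence
\[
\frac{dg_0}{ds}(s^*)=-2\beta\,\langle GS|E_0(s^*)\rangle\langle GS|E_1(s^*)\rangle,\qquad
\frac{dg_1}{ds}(s^*)=+2\beta\,\langle GS|E_0(s^*)\rangle\langle GS|E_1(s^*)\rangle .
\]

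Finally I would add and subtract these two relations. The sum is manifestly $0$, giving the first identity. The difference equals $-4\beta\,\langle GS|E_0(s^*)\rangle\langle GS|E_1(s^*)\rangle$, which is exactly $2\beta\,g_{0,1}(s^*)$ once $g_{0,1}(s^*)$ is read as the signed cross-overlap term appearing in the decomposition $|GS\rangle=g_0(s)|E_0(s)\rangle+g_1(s)|E_1(s)\rangle$ of Definition~\ref{gAC}; note that at the crossing point $g_0(s^*)=g_1(s^*)$, so this cross term has magnitude $2g_0(s^*)$, consistent with the picture $|GS\rangle\simeq\frac1{\sqrt2}\big(|E_0(s^*)\rangle+|E_1(s^*)\rangle\big)$. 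I do not anticipate a real obstacle here: the difficulty was already spent in Theorem~\ref{ACtheorem}, and the only points needing a line of care are the local real-analytic choice of eigenvectors near $s^*$ and fixing the sign/normalisation convention for $g_{0,1}$ — the final answer is gauge invariant, since flipping the sign of $|E_1(s^*)\rangle$ flips both $\beta$ and $\langle GS|E_1(s^*)\rangle$ and leaves the products unchanged.
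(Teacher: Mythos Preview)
Your proposal is correct and follows essentially the same approach as the paper: differentiate $g_k(s)=\langle GS|E_k(s)\rangle^2$, plug in Theorem~\ref{ACtheorem} at $s^*$, and use $g_0(s^*)=g_1(s^*)$. Your extra care in flagging the sign/normalisation convention for $g_{0,1}(s^*)$ is warranted, since the paper never defines this symbol explicitly beyond the remark $g_0(s^*)=g_1(s^*)=g_{0,1}(s^*)$; with that reading the stated second identity is only correct up to a sign and a factor of $2$, so your gauge-invariance comment is more precise than the paper itself.
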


\begin{proof}

We have $\frac{dg_i}{ds}(s)=2\langle GS|E_i(s)\rangle \langle GS|\frac{dE_i}{ds}(s) \rangle$, $g_0(s^*) = g_1(s^*)$ and theorem \ref{ACtheorem}
\qed
\end{proof}

\noindent For a strong anti-crossing, $g_0(s^*)=g_1(s^*)=g_{0,1}(s^*) \simeq 1/2$. These quantities are easy to compute and we can verify on the toy example 1 (see figure \ref{DG}) when we vary the parameter $\alpha$ from 0 to 0.66 close to the threshold 2/3 that we observe what the latter corollary tells us. 

\begin{figure}[ht]
\centering
$$
\begin{array}{cc}
\includegraphics[scale=0.3]{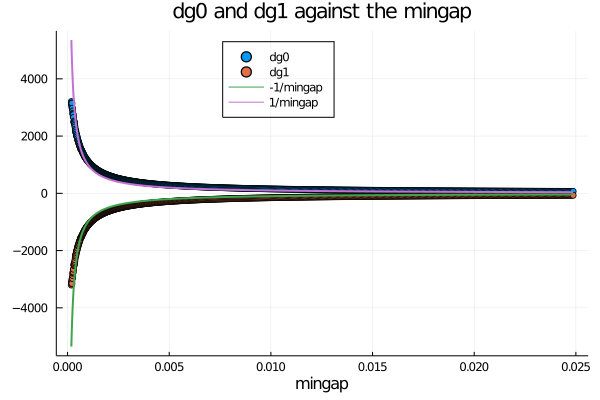} &
\includegraphics[scale=0.3]{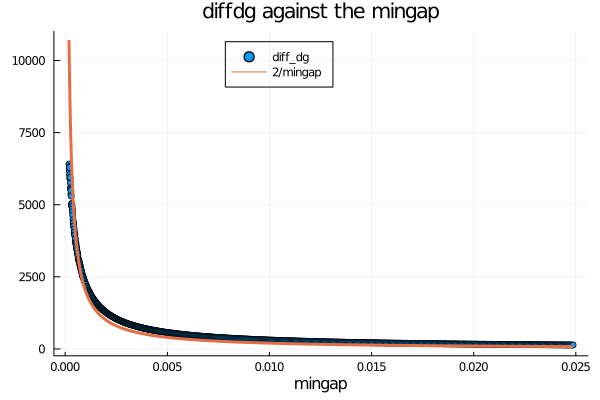} \\
(a) & (b) \\
\end{array}
$$
\caption{Evolution of the derivative of $g_0$ and $g_1$ (a) and the difference (b) against the min-gap for $\alpha$ varying from 0 (large gap) to 0.66 close to the threshold 2/3 (small gap). }
\label{DG}
\end{figure}
\noindent On figure \ref{DG}, we clearly see that the derivative of $g_0$ and $g_1$ have similar opposite variation. The plain lines express the tendency of the variations which are clearly inversely proportional to the min-gap.

\section{Future work}

Here, the results only look at anti-crossing between $E_0(s)$ and $E_1(s)$, but for further improvements in understanding the behavior of the eigen-quantities during the evolution, one could characterize an anti-crossing between $E_0(s)$ and $E_{i>1}(s)$. For example, for $i=2$, the evolution of the eigenvalues will look like figure \ref{SpecialAC}. The anti-crossing like described by Wilkinson happens between $E_0$ and $E_2$, thus we have the crossing of $g_0$ and $g_2$ but all other $g_i$'s cannot be considered negligible, especially $g_1$ which have a significant value. \\

\begin{figure}[ht]
\centering
$$
\begin{array}{cc}
\includegraphics[scale=0.3]{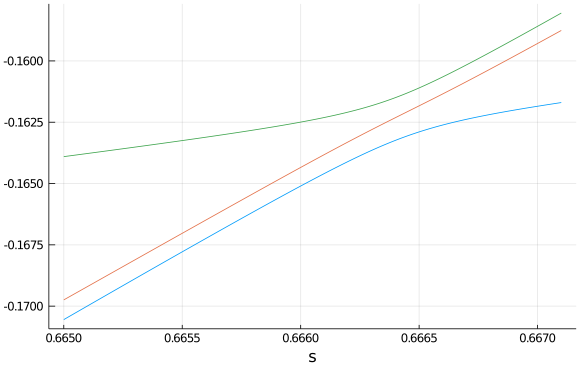} &
\includegraphics[scale=0.3]{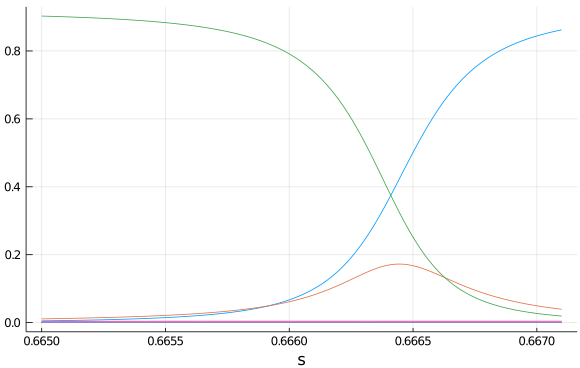} \\
(a) & (b) \\
\end{array}
$$
\caption{Special case of anti-crossing between $E_0$ and $E_2$. The energies on the left (a) and the $g_k$'s on the right (b). The range is an extra-zoom between $0.665$ and $0.667$. }
\label{SpecialAC}
\end{figure}
\noindent Another interesting track to follow is the study of starting in another initial excited state to have a high probability to measure the solution with a smaller run time. Something like that was numerically shown in \cite{Crosson} but with no clear explanation. We understand by looking the $g_k$ that it's the successive anti-crossings ending in the ground-state that advantages the initialization from a higher energy level. On figure \ref{ManyAC}, starting at the $9^{th}$ energy level would allow to stop the evolution at $58 \%$ of the total run time $T$ to have a high probability to measure the solution. This is still an open question, and it's certainly not as obvious and some hypothesis on the degeneracy of the energy levels seems necessary. 
\\ \\
\begin{figure}[ht]
\centering
$$
\begin{array}{cc}
\includegraphics[scale=0.3]{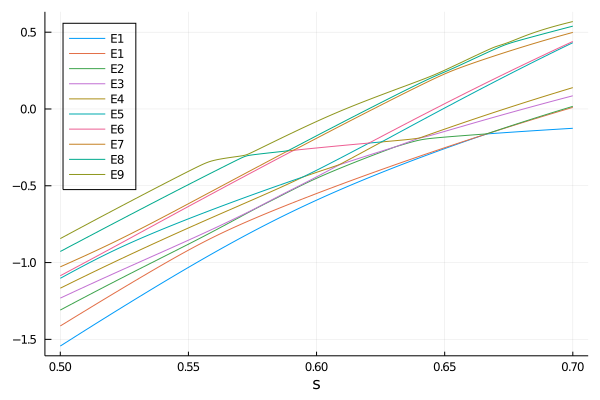} &
\includegraphics[scale=0.3]{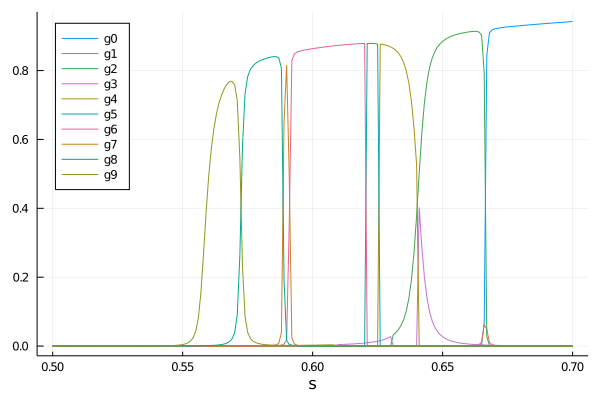} \\
(a) & (b) \\
\end{array}
$$
\caption{Successive anti-crossings ending in the ground-state. Evolution of the energy (a) and the $g_k$'s (b) between $0.52$ and $0.72$ }
\label{ManyAC}
\end{figure}

\section*{Acknowledgments}

This work was supported in part by the French National Research Agency (ANR) under the research project SoftQPRO ANR-17-CE25-0009-02, and by
the DGE of the French Ministry of Industry under the research project PIAGDN/QuantEx P163746-484124. This work has also received funding from the European Union’s Horizon
2020 research and innovation programme under grant agreement No. 817482 (PASQuanS).

\bibliographystyle{plain}
\bibliography{biblio}

\newpage
\appendix
\section{Appendices}

To illustrate the intuition on the $a_k$'s, this graph below \ref{a0a3} is the same structure of the graph figure \ref{toy1} where nodes 1 and 3 are swapped as well as nodes 5 and 6. We keep the same weights vector $w=[1,1,1,1.5,1.5,1.5]$. 

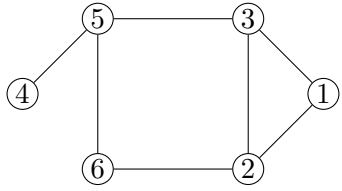
\begin{figure}[ht]
  \centering
  \begin{tikzpicture}
  \node[circle, draw, inner sep=1pt](v5) at (0,0){5};

  \node[circle, draw, inner sep=1pt](v3) at (2,0){3};
  \node[circle, draw, inner sep=1pt](v1) at (3,-1){1};
  \node[circle, draw, inner sep=1pt](v4) at (-1,-1){4};

  \node[circle, draw, inner sep=1pt](v6) at (0,-2){6};
  \node[circle, draw, inner sep=1pt](v2) at (2,-2){2};

  \draw (v3) -- (v1) -- (v2) -- (v3) -- (v5) --(v6) -- (v2) ;
  \draw (v4) -- (v5) ;

  \end{tikzpicture}
  \caption{Toy example 2}
  \label{a0a3}
\end{figure}

\noindent For $\alpha=0.2$, this produces the following final states according to their energy (\ref{StatesEi_toy2} (a)) and the eigenvalues evolution (\ref{StatesEi_toy2} (b)):

\begin{figure}[ht]
\centering
$$
\begin{array}{cc}
\includegraphics[scale=0.4]{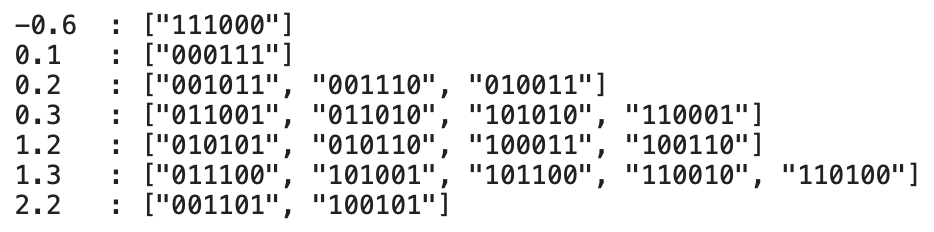} &
\includegraphics[scale=0.4]{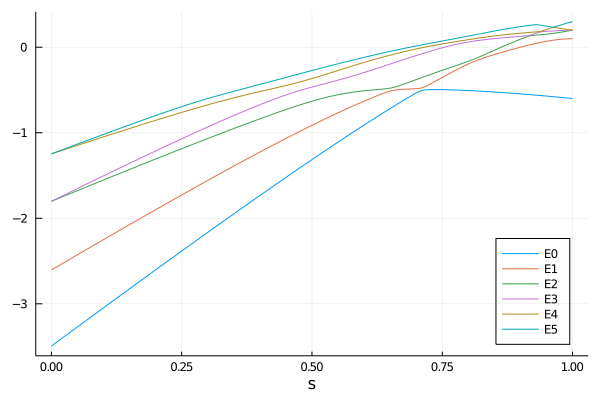} \\
(a) & (b) \\
\end{array}
$$
\caption{States (a) with their energy and $E_i(s)$ (b) during evolution for toy example \ref{a0a3} with $\alpha=0.2$}
\label{StatesEi_toy2}
\end{figure}

\noindent With this instance, we observe that the final slope of $E_0(s)$ comes from the slope of $E_2(s)$, therefore we expect that $g_2(s)$ becomes dominant before transmitting to $g_1(s)$ after the first anti-crossing between $E_2$ and $E_1$. Eventually, the anti-crossing of $E_1$ and $E_0$ will produce a crossing of $g_1$ and $g_0$. The latter will become dominant till being equal to 1 at the end. Now, focusing on the slope of $E_0(s)$ before the anti-crossing, following the different successive anti-crossings, the jumps end up in the $4^{th}$ energy level. In terms of $a_k$'s and $b_k$'s, this means that $a_3(s)$ becomes important just before the anti-crossing and crosses $a_0(s)$ at the anti-crossing. The same goes for $b_3(s)$ and $b_0(s)$. The plots below supports these previous analyses. \\ \\

\begin{figure}[ht]
\centering
$$
\begin{array}{ccc}
\includegraphics[scale=0.22]{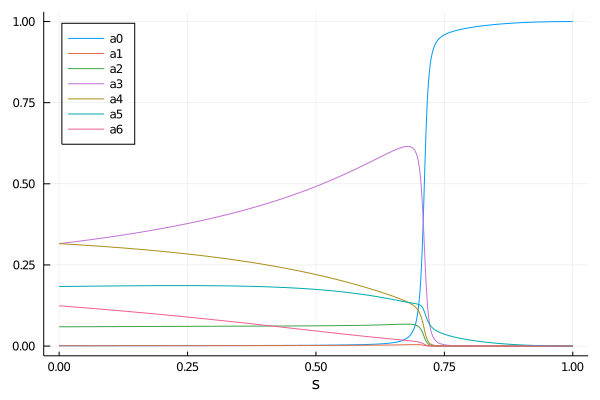} &
\includegraphics[scale=0.22]{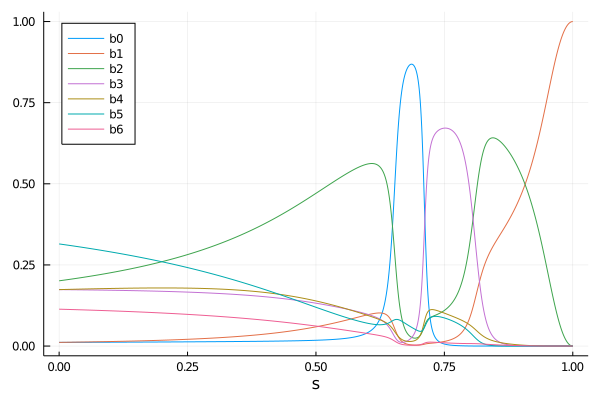}  & 
\includegraphics[scale=0.22]{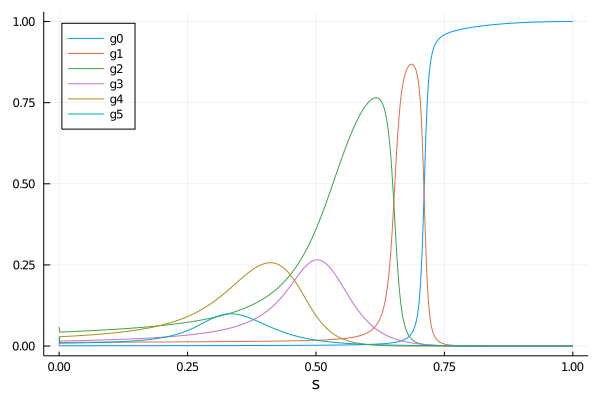} \\
(a) & (b) &(c)\\
\end{array}
$$
\caption{$a_k$ (a), $b_k$ (b) and $g_k$ (c) during evolution for toy example \ref{a0a3} with $\alpha=0.2$}
\label{AkBkGk_toy2}
\end{figure}

\noindent \textbf{Remarks}: 
\begin{enumerate}
\item On figure \ref{AkBkGk_toy2} (a), we only see one specific situation, namely the crossing of $a_3$ with $a_0$ at the anti-crossing point $s^*$ between $E_0$ and $E_1$. This means that the curve $E_0(s)$ was going toward the $4^{th}$ energy level in terms of the slope before $s^*$ and immediately change its slope toward its final direction of the $1^{st}$ energy level. Hypothetically, we could observe $a_3$ crossing $a_1$, which will indicate that $E_0(s)$ change its direction toward the $2^{nd}$ energy level, then $a_1$ crossing $a_0$. In this hypothetical case, $E_0(s)$ undergoes two anti-crossings. 
\item On figure \ref{AkBkGk_toy2} (b), we focus on the behavior of $|E_1(s)\rangle$. Here, we understand that $E_1(s)$ first went toward the $3^{rd}$ energy level before changing direction toward the lowest energy level. This is $b_2$ crossing $b_0$ when the first anti-crossing between $E_1$ and $E_2$ occurs. Then, it takes the direction of the $4^{th}$ energy level when $b_0$ crosses $b_3$. Indeed, it fetches the direction of $E_0$ before anti-crossing (remember it was $a_3$ which was dominant at this point). Then again takes back it's initial direction with $b_2$ becoming dominant at the second anti-crossing between $E_1$ and $E_2$. Eventually, smoothly change its direction toward its final goal.
\item On figure \ref{AkBkGk_toy2} (c), the point of view is quite different as we look from the final lowest energy position $E_0(1)$ and see from where it comes. We see on figure \ref{StatesEi_toy2} (b) that the final blue slope undergoes two anti-crossing before becoming blue. Indeed, it starts green, then jumps to red and finally blue. These successive anti-crossings appear on the plot of $g_k$, first, $g_2$ is dominant, then at the first anti-crossing between $E_2$ and $E_1$, $g_2$ crosses with $g_1$. Now, the final slope of $E_0$ is transported by $E_1$. Eventually, $E_1$ anti-crosses $E_0$ so $g_1$ crosses $g_0$ and the evolution (at least for the ground-state) can finish peacefully. 
\end{enumerate}

\end{document}